\def\BibTeX{{\rm B\kern-.05em{\sc i\kern-.025em b}\kern-.08em
    T\kern-.1667em\lower.7ex\hbox{E}\kern-.125emX}}
\newenvironment{ttbox}{\begin{alltt}\ttbraces\small\tt}%
                      {\end{alltt}}
\def\ttbraces{\let\.=\nobreak\chardef\{=`\{\chardef\}=`\}\chardef\|=`\\}
\newcommand\ttand{\mbox{{$\land$}}}
\newcommand\ttcap{\mbox{{$\cap$}}}
\newcommand\ttfun{\mbox{{$\Rightarrow$}}}
\newcommand\ttimp{\mbox{{$\longrightarrow$}}}
\newcommand\ttequiv{\mbox{{$\equiv$}}}
\newcommand\ttforall{\mbox{{$\forall$}}}
\newcommand\ttin{\mbox{{$\in$}}}
\newcommand\ttImp{\mbox{{$\Longrightarrow$}}}
\newcommand\ttleq{\mbox{{$\le$}}}
\newcommand\ttrelI{\mbox{{$\to_{i}$}}}
\newcommand\ttrel[1]{\mbox{{$\to_{#1}$}}}
\newcommand\ttrelstar[1]{\mbox{{$\to_{#1}^*$}}}
\newcommand\ttalpha{\mbox{{$\alpha$}}}
\newcommand\ttsubseteq{\mbox{{$\subseteq$}}}
\newcommand\ttsO{\mbox{{\texttt{s}$_0$}}}
\newcommand\ttsone{\mbox{{\texttt{s}$_1$}}}
\newcommand\ttvdash{\mbox{{$\vdash$}}}
\newcommand\ttGn[1]{\mbox{$\texttt{G}_{\texttt{#1}}$}}
\newcommand\ttD[1]{\mbox{$\texttt{D}_{\texttt{#1}}$}}
\newcommand\ttF{\mbox{{${\sf{F}}$}}}
\newcommand\ttsum[1]{\mbox{{$\Sigma_{\texttt{#1}}$}}}
\newcommand\ttavg{\mbox{{$\triangledown$}}}
\newcommand\tteta{\mbox{{$\eta$}}}
\newcommand\ttlos{\mbox{$\mathcal{L}$}}
\newcommand\ttempty{\mbox{{$\varnothing$}}}
\newcommand\ttdisjone{\mbox{{$\stackrel{1}{\lozenge}$}}}
\newcommand\ttdisjx[1]{\mbox{{$\stackrel{#1}{\lozenge}$}}}
\newcommand\ttdisjunion[1]{\mbox{{$\stackrel{\bullet}{\bigcup}_{\texttt{#1}}$}}}
\newcommand\tteheps{\mbox{{$e^\epsilon$}}}
\newcommand\ttehneps{\mbox{{$e^{-\epsilon}$}}}
\newcommand\tteps{\mbox{{$\epsilon$}}}
\newcommand\Att{\mbox{{${\cal A}$}}}
\newcommand\rndass{\mbox{{$\mathord{\stackrel{\$}{\leftarrow}}$}}}
\newtheorem{theorem}{Theorem}
\newtheorem{definition}{Definition}[section]
\newtheorem{lemma}[definition]{Lemma}
\title{\LARGE \bf
  Formalisation of Security for Federated Learning with DP and Attacker Advantage
  in IIIf for Satellite Swarms -- Extended Version
}
\author{Florian Kamm\"uller$^{1,2}$ 
\thanks{$^{1}$Faculty of Science and Technology, Department of Computer Science,
        Middlesex University London, UK.
        {\tt\small <F.Kammueller@mdx.ac.uk>}}%
\thanks{$^{5}$Fakult\"at IV, Elektrotechnik und Informatik, TU Berlin, Germany.
{\tt\small <info@tu-berlin.de>}}%
}
\begin{document}

\maketitle
\thispagestyle{empty}
\pagestyle{empty}

\begin{abstract}
  In distributed applications, like swarms of satellites, machine learning can be efficiently applied even on
  small devices by using Federated Learning (FL). This allows to reduce the learning complexity by transmitting
  only updates to the general model in the server in the form of differences in stochastic gradient descent. FL
  naturally supports differential privacy but new attacks, so called Data Leakage from Gradient (DLG) have been
  discovered recently. There has been work on defenses against DLG
  but there is a lack of foundation and rigorous evaluation of their security. In the current work, we
  extend existing work on a formal notion of Differential Privacy for Federated Learning distributed dynamic
  systems and relate it to the notion of the attacker advantage. This formalisation is carried out within the
  Isabelle Insider and Infrastructure framework (IIIf) allowing the machine supported verification of theory and
  applications within the proof assistant Isabelle. Satellite swarm systems are used as a motivating use case but
  also as a validation case study.
\end{abstract}

\section{Introduction}
In this paper, we  review the formal characterization of Federated Learning (FL) and Differential Privacy (DP)
as well as its formalisation in the Isabelle Insider and Infrastructure framework (IIIf).
This method allows the logical characterization of FL as an inductive protocol definition in the IIIf 
providing a formal basis for expressing and verifying the security and privacy of applications. At the
same time, it provides a formal definition of DP over dynamic executions of the FL protocol. As the formalisation
is done in IIIf, the formalisation can be used to prove ``meta-level theory'', that is, theorems about the
FL-algorithm can be proved within the interactive Isabelle theorem prover. As an example, a proof of
a composition theorem is provided that allows to decompose the proof of DP of an FL architecture to the
clients' gradients. In addition to the meta-theory, it is possible to express application examples and reason
about their security by applying the provided generic meta-theory theory.

In this paper, we extend the FL-DP formalisation in IIIf by considering a more pragmatic way of characterizing
security: Attacker Advantage ($\text{Adv}(\Att)$). To this end, we formalise the notion of $\text{Adv}(\Att)$
and relate it to the one of FL-DP.
We further illustrate how the new meta-theory can be applied to a case study of satellite swarms.

\section{Background}

\subsection{Isabelle Infrastructure and Insider framework (IIIf)}
Isabelle is a generic Higher Order Logic (HOL) proof assistant. Its generic
aspect allows the embedding of so-called object-logics as new theories
on top of HOL. 
Object-logics, when added to Isabelle using constant and type definitions,
constitute a so-called {\it conservative extension}. This means that no 
inconsistencies can be introduced. Isabelle supports modular reasoning, that is,
we can prove theorems {\it in} an object logic but also {\it about} it. 
This allows the building of telescope-like structures in which a meta-theory
at a lower level embeds a more concrete ``application'' at a higher level.
Properties are proved at each level. Interactive proof is used to prove these 
properties but the meta-theory can be applied to immediately produce results.
Moreover, Isabelle can be used to build frameworks by providing generic meta-theory for
specific application domains.

The  Isabelle Infrastructure and Insider framework (IIIf) has been created initially for the modeling 
and analysis of Insider threats but is now a general framework for the state-based security
analysis of infrastructures with policies and actors.
Temporal logic and Kripke structure build the foundation. Meta-theoretical
results have been established to show equivalence between attack trees and CTL statements.
This foundation provides a generic notion of state transition on which attack trees and
temporal logic can be used to express properties. In addition, proofs are facilitated because
attacks can be derived using temporal logic analysis on Kripke structures by using the equivalence
between attack trees and CTL statements. Thereby, the verification corresponds to a complete
state exploration also known as model-checking.

The IIIF has been applied to various case studies.
In Section \ref{sec:concl}, we give as part of the related work a concise overview of the related applications of IIIf.

\subsection{Differential Privacy Federated Learning in IIIf}
The notion of Differential Privacy (DP) has become an ad hoc standard, most notably adopted for the US
census \cite{usc:20} to 
obfuscate personal data in public data records.
  Federated Learning (FL) \cite{mcm:17} is a widely used algorithm\footnote{Google scholar counts more than 60K
  citations for the seminal paper \cite{mcm:17}.} for machine learning in heterogeneous distributed applications.
  FL facilitates machine learning for mobile distributed scenarios. Its natural 
  relationship to DP has been investigated but the relationship to notions
  of noninterference in distributed systems is still largely open. 
  Federated Learning and Differential Privacy are naturally related but formally their relationship
  is concerned with partitioning and distributing data securely over a network (the DP part)
  while safely only returning the stochastic gradient descent (SGD) that ML has procured locally (the FL part).
  Both concepts have been studied excessively but not many (with a few notable exceptions, for example,
  \cite{tkd:11,lwfz:23,kpfn:24}) have observed that DP resembles the classical notion of noninterference used to
  express secure information flows.
  In this paper, we use this observation and extend it even further addressing the distribution and partitioning
  of the data over a network. We use the expressive power of the Isabelle Insider and Infrastructure framework
  (IIIf) to support this key issue of Federated Learning and relate them to classical Security notions of
  computational security by exploring the relationship to the cryptographic concept of Attacker Advantage ($\text{Adv}(\Att)$).

\subsection{DP and FL}  
The IIIf supports modeling infrastructures, human actors and policies. It also allows representing data
and its distribution across a network as well as the exchange of the SGDs between the client nodes
and the server. For modeling the notions of DP, an extension of the IIIf with probabilities \cite{kam:19c, kpfn:24}
supports quantified specification of security and privacy over infrastructures. Thus, DP can be
expressed directly as a probabilistic relation over a given infrastructure 
system. The use of IIIf allows expressing DP in a dynamic way as noninterference while taking into account
the nature of FL, that is, the (potentially unbalanced) distribution of data over a network. The novelty of
the approach \cite{kpfn:24} lies in considering DP and FL in the context of dynamically changing (data)
systems, that is, with an explicit representation of data distributed on clients and servers as well as
representing the FL algorithm formally as a protocol.
Compared to the vast amount on works exploring DP and FL on applications to data sets, the focus of
\cite{kpfn:24} is on the dynamicity (of data and communication) and data distribution aspects of DP and FL.
Also, by contrast, it formally models all concepts in IIIf thus introducing mathematical rigour and
automated verification.
  
The current paper now extends the existing NI-FL-DP concept by establishing -- within the formal framework
IIIf -- the conceptual relationship to classical complexity theory notions of computational security via the
Attacker Advantage ($\text{Adv}(\Att)$).

\subsection{Attacker Advantage}
  \label{sec:AA}
Many security notions are characterized by the {\it computational indistinguishability} between
two distributions $D_0$ and $D_1$. This computational indistinguishability is measured by the {\it advantage}
an adversary $\Att$ can have in distinguishing the two distributions. That is, we define the Advantage Adv($\Att$)
as follows.
\[
\text{Adv}(\Att) = P_{x\in D_1}(\Att(x) = 1) - P_{x \in D_0}(\Att(x) = 1)
\]
where $\Att(x) = 1$ means that the attacker $\Att$ guesses that the input $x$ is in distribution $D_1$
(while $\Att(x) = 0$ would mean $x \in D_0$). The notion $P_{x\in D_1}(\Att(x) = 1)$ is equivalent to
conditional probability $P(\Att(x) = 1 | x\in D_1)$.

For example, for the Diffie-Hellman key exchange protocol, with Diffie-Hellman tuples in $\mathbb{G} = \langle g\rangle$,
a group of prime order $p$, spanned by a generator $g$, and denoted additively.
\begin{definition}[Decisional Diffie-Hellman Problem (DDH)]
  \label{def:ddh}
  The DDH assumption in a group $\mathbb{G}$ 
  of prime order $p$, with a generator $g$, states that the distributions $D_0$ and $D_1$ are computationally
  hard to distinguish, where
\[
\begin{array}{ccc}
   D_0 & = & \{ (a\cdot g,b\cdot g,ab\cdot g). a,b \rndass Z_p\} \\
   D_1 & = & \{(a\cdot g,b\cdot g,c\cdot g). a,b,c \rndass Z_p\}
\end{array}
\]
\end{definition}
The notation $k \rndass K$ is used in Cryptography to express ``selecting a random string from
[a uniformly distributed set of keys] $K$ and naming it $k$'' \cite{gb:08}[p. 63]. The set of keys
in our case is $Z_p$ which is a another common mathematical notion for the prime number field
$\{0, \dots, p-1\}$ with addition and multiplication modulo $p$.
We then denote as Adv$^{\texttt{ddh}}_G({\cal A})$ the advantage of an adversary $\Att$ in distinguishing
$D_0$ and $D_1$. Looking back at the definition of Adversarial Advantage above, this means that if the DDH-problem
is indeed computationally hard (as we believe it to be), then the attacker will not be able to distinguish whether
$c$ is the product $ab$ of the secrets $a$ and $b$. That is, they will output a 1 with about the same
probability when exposed to an example that is in $D_0$ as for one that is in $D_1$, so that the difference
$\text{Adv}(\Att)$ (defined as $\text{Adv}(\Att) = P_{x\in D_1}(\Att(x) = 1) - P_{x \in D_0}(\Att(x) = 1)$) is close to 0.

This example illustrates the notion of $\text{Adv}(\Att)$ on the well-known DH problem; we will adapt it to the setting of
FL and DP.

To this end, we need some more tools.
A global assumption is that distinguishing distributions is symmetric. That is, 
  $P_{x \in D_0}(\Att(x = 0)) = P_{x \in D_1}(\Att(x = 1))$.

\begin{lemma}\label{lem:adv}
  \[
  \text{Adv}(\Att) = 2 \times P_{\substack{x\in D_b\\b\in\{0,1\}}}(\Att(x) = b) - 1
\]
\end{lemma}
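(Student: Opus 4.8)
The plan is to start from the right-hand side and rewrite the averaged success probability $P_{\substack{x\in D_b\\b\in\{0,1\}}}(\Att(x) = b)$ until the defining expression for $\text{Adv}(\Att)$ emerges. The first step is to read the subscript $b\in\{0,1\}$ as a uniformly random challenge bit, so that by the law of total probability the joint event splits into the two cases $b=0$ and $b=1$:
\[
P_{\substack{x\in D_b\\b\in\{0,1\}}}(\Att(x) = b) = \tfrac{1}{2}\,P_{x\in D_0}(\Att(x) = 0) + \tfrac{1}{2}\,P_{x\in D_1}(\Att(x) = 1).
\]
This is the only modelling step; everything after it is arithmetic on probabilities.

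Next I would eliminate the term $P_{x\in D_0}(\Att(x) = 0)$ in favour of quantities that occur in the definition of $\text{Adv}(\Att)$. Since $\Att$ returns a single bit, its two possible outcomes on any fixed input distribution are complementary, giving $P_{x\in D_0}(\Att(x) = 0) = 1 - P_{x\in D_0}(\Att(x) = 1)$. Substituting this and collecting the constant terms yields
\[
P_{\substack{x\in D_b\\b\in\{0,1\}}}(\Att(x) = b) = \tfrac{1}{2} + \tfrac{1}{2}\bigl(P_{x\in D_1}(\Att(x) = 1) - P_{x\in D_0}(\Att(x) = 1)\bigr),
\]
and the bracketed difference is exactly $\text{Adv}(\Att)$ by its definition. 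Multiplying through by $2$ and subtracting $1$ then gives the claimed identity directly.

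It is worth noting that this derivation uses only the binary-output complementation together with the uniform choice of $b$; the global symmetry assumption $P_{x\in D_0}(\Att(x) = 0) = P_{x\in D_1}(\Att(x) = 1)$ is not strictly needed here, although it offers an alternative route that collapses the two halves of the total-probability sum into the single term $P_{x\in D_1}(\Att(x) = 1)$ before the final algebra. The one point that must be handled with care is the semantics of the averaged probability on the right: one has to justify that the outer expectation is over a uniformly distributed $b$ and that $x$ is then drawn from the corresponding $D_b$, since an unbalanced prior on $b$ would alter the constant $\tfrac{1}{2}$ and destroy the exact ``$2p-1$'' normalisation. Beyond pinning down that convention there is no real obstacle, as the lemma is essentially a bookkeeping identity relating the two standard ways of scoring a distinguisher.
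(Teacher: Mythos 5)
Your proof is correct, but it takes a genuinely different route from the paper's. The paper starts from the definition $\text{Adv}(\Att) = P_{x\in D_1}(\Att(x) = 1) - P_{x \in D_0}(\Att(x) = 1)$, applies the complement rule (which it calls Bayes rule) to get $P_{x\in D_1}(\Att(x)=1) - 1 + P_{x\in D_0}(\Att(x)=0)$, and then invokes the global symmetry assumption $P_{x \in D_0}(\Att(x) = 0) = P_{x \in D_1}(\Att(x) = 1)$ to collapse this to $2 \times P_{x\in D_1}(\Att(x)=1) - 1$ (and, symmetrically, to $2 \times P_{x\in D_0}(\Att(x)=0) - 1$), finally ``generalizing'' these two equal conclusions into the averaged notation $P_{\substack{x\in D_b\\b\in\{0,1\}}}(\Att(x)=b)$. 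You instead interpret the subscript as a uniformly random challenge bit, split by the law of total probability into $\tfrac{1}{2}P_{x\in D_0}(\Att(x)=0) + \tfrac{1}{2}P_{x\in D_1}(\Att(x)=1)$, and use only binary-output complementation -- the standard cryptographic ``advantage equals $2p-1$'' argument. The trade-off is exactly the one you flag in your closing remarks: your derivation needs no symmetry assumption but does need the uniform prior on $b$ (an unbalanced prior would break the constant $\tfrac{1}{2}$), whereas the paper's derivation never fixes a prior on $b$ at all -- under its symmetry assumption both conditional success probabilities coincide, so the averaged quantity is well defined regardless of weighting -- at the cost of assuming that symmetry. Your version is the more standard and more self-contained proof; the paper's makes the (stronger) symmetry hypothesis do the work that uniform sampling does for you.
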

\begin{proof}
From the definition of advantage we have
  \[
\text{Adv}(\Att) = P_{x\in D_1}(\Att(x) = 1) - P_{x \in D_0}(\Att(x) = 1)
\]
By Bayes rule $P(A|B) = 1 - P(\overline{A}|B)$ this is equal to
\[ P_{x\in D_1}(\Att(x) = 1) - (1 - P_{x \in D_0}(\Att(x) = 0)) \]
which is the same as
\[ P_{x\in D_1}(\Att(x) = 1) - 1 + P_{x \in D_0}(\Att(x) = 0) \]
Replacing the symmetry assumption above on the right, we get
\[ P_{x\in D_1}(\Att(x) = 1) - 1 + P_{x \in D_1}(\Att(x) = 1) \]
which rearranges to
\[ 2 \times P_{x\in D_1}(\Att(x) = 1) - 1\]
Replacing on the left instead yields
\[ P_{x\in D_0}(\Att(x) = 0) - 1 + P_{x \in D_0}(\Att(x) = 0) \]
thus rearranging to
\[ 2 \times P_{x\in D_0}(\Att(x) = 0) - 1\]
Generalizing these two conclusions, we get
\[2 \times P_{\substack{x\in D_b\\b\in\{0,1\}}}(\Att(x) = b) - 1 \]
as we needed to show. \end{proof}

\section{Formalising the Attacker Advantage for FL-DP}

\subsection{FL Algorithm}
FL combines the requirements of training on distributed clients where the data
may be unevenly distributed (``unbalanced'' \cite{mcm:17}) and local data sets may not be representative
of the global one (``non-iid'' \cite{mcm:17}), massively distributed and with limited communication.

\begin{algorithm}
	\caption{FL} 
	\begin{algorithmic}[1]
          \State {\bf Server executes}:
             \State initialize $w_0$; $K$ := set of clients
	     \State \qquad {\bf for} each round {$t=1,2,\ldots$}
             \State \qquad \qquad {\bf for} each client {$k \in K$} {\bf in parallel do}
	     \State \qquad \qquad \qquad  $w_{t+1}^k :=$ ClientUpdate($k$,$w_t$)
             \State \qquad \qquad \qquad $w_{t+1} := \sum_{k+1}^K \frac{n_k}{n} w_{t+1}^k$
             \State {\bf ClientUpdate}($k$,$w$):
              \State  \qquad $w := w - \eta \ttavg \ttlos(w;P_k)$
              \State \qquad return $w$ to Server
	\end{algorithmic} 
\end{algorithm}

The FL algorithm can be applied to any finite-sum objective as follows \cite{mcm:17}
\[
  \min_{w \in \mathbb{R}^d} f(w) \qquad \text{where}   \qquad f(w) \stackrel{\text{def}}{=} \frac{1}{n} \sum_{i = 1}^n f_i(w).
\]
where the $f_i$ are loss functions, that is, $f_i(w) = \ttlos(x_i,y_i;w)$ where $(x_i,y_i)$ is an example
data and $w$ is the model parameter, that is, the current value for the parameters of the ML algorithm
to get the best prediction. So, the FL algorithm tries to approximate a prediction function by minimizing
a loss function (sometimes called cost function) on data example points (the training data).
What is more is that the FL algorithm considers a set of clients $K$ over which the data is partitioned.
Let $P_k$ be the partition for client $k \in K$ and $n_k = |P_k|$ the number of data points in that partition.
This adds a layer of detail to highlight the unbalanced scenario for the FL algorithm. We can re-write
the above description by detailing the averaging in each client.
\[
f(w) = \sum_{k=1}^K \frac{n_k}{n}F_k(w) \qquad \text{where} \qquad F_k(w) = \frac{1}{n_k} \sum_{i\in P_k} f_i{w} .
\]
\subsection{Formalisation of FL-DP protocol}
The formalization of FL in IIIf implements the FL algorithm introduced in the previous sections as 
protocol. 
That is, the execution of the FL algorithm is a sequence of exchanges between the server and a number
of distributed clients. The framework IIIf supports well this formalization of protocols. For example,
Quantum Key Distribution \cite{kam:19c,knpw:25} also models protocols built on top of the basic Kripke
structures with CTL and Attack Trees thus providing concepts and techniques for formalizing FL in IIIf.

In general, infrastructures in the IIIf are graphs with actors and local policies attached to the nodes.
For the FL algorithm, the actors are the server and its clients attached each to their nodes (\texttt{location}).
The data can be explicitly represented in infrastructures and is also attached to the nodes corresponding to the
server and its clients. 
The constructor \texttt{Lgraph} of the data type \texttt{igraph} assembles the components
\texttt{gra}, \texttt{aloc}, \texttt{server}, \texttt{clients}, \texttt{ready}, \texttt{gradient},
\texttt{curmodpar}, \texttt{partition},
and \texttt{dataset} which represent: the actual topology of locations, the alocation of actors (that is, also
devices) to the locations, the server and clients, the clients finished (``ready'') with learning in a round of the
FL algorithm, the local gradients currently held in a location, the global current model
parameter used for the distributed learning, the partition of the global data set onto the distributed system, and the
global data set. The name introduced for each component can be used as a projection functions on the \texttt{igraph} to
select each of the components. The parameter type \texttt{\ttalpha} in the components \texttt{gradient} and
\texttt{curmodpar} is given as a type variable. This permits representing arbitrary vectors of model parameters
(coefficients).
\begin{ttbox}
{\bf{datatype}} igraph = Lgraph{\bf gra:} location set
                     {\bf aloc:} actor \ttfun location
                     {\bf server:} actor
                     {\bf clients:} actor set
                     {\bf ready:} actor set
                     {\bf gradient:} actor \ttfun {\ttalpha :: num}
                     {\bf curmodpar:} \ttalpha
                     {\bf partition:} actor \ttfun data set
                     {\bf dataset:} data set                           
\end{ttbox}
The type \texttt{igraph} represents the infrastructures that we use for modeling distributed scenario.
Protocol execution traces are sets of lists of events: each list represents a finite execution
trace of the protocol. The set of such traces is a snapshot of execution traces of the system.
The possibility of consideration of parallel traces is crucial for security; many attacks come from
exploits between simultaneous protocol runs.
\begin{ttbox}
{\bf{datatype}} protocol = Protocol {\bf evs:} event list set
\end{ttbox}
The state is given by a datatype called \texttt{infrastructure} containing the \texttt{igraph} together with
the protocol and a set of local policies attached to graph locations.
Infrastructures contain the topology of the network with the current data distribution as well as a current
set of protocol traces and the local policies to distribute a global security policy. The policy component
\texttt{\bf poli} is relevant for applications to define local data and action policies.
\begin{ttbox}
{\bf{datatype}} infrastructure =
Infrastructure {\bf igra:} igraph 
               {\bf prot:} Protocol
               {\bf poli: }[igraph, location] \ttfun policy set
\end{ttbox}
The semantics of a distributed system is formalized as state transition relations over this infrastructure type.
This can be easily done in the IIIf by instantiation of the generic Kripke structures. Thereby automatically
providing the meta-theory of temporal logic, attack trees and model checking. 

For FL the events in a protocol trace correspond to the communication actions of the FL algorithm:
as an initial step {\it putting} the partition of the data on the clients, {\it getting} back the local
gradient descents from the clients to the server, and {\it evaluation} of the combined local gradient
descents by the server once all clients have been harvested to compute and redistribute the new current
model parameter for the next learning round.

The theory of Kripke structures and temporal logics CTL in the IIIf supports a generic state
transition relation that can be instantiated as \texttt{\ttrel{FL}} to implement the FL algorithm.
This relation can then be defined as an inductive relation over the infrastructure type.
Consequently, it has three cases:
\begin{itemize}
\item \texttt{\bf put\_part} is the initial deployment of the partitioned global data set to all the
  clients 
  by inserting the data into disjoint partitions.  
\item \texttt{\bf get\_grad} is the case where a local gradient results from the local
  learning in a client the position \texttt{c}. 
  The result is loaded into the graphs' component \texttt{gradient G} by updating 
  this function for client \texttt{c}.
  Also, the current client \texttt{c} is added to the set \texttt{ready G} signifying that \texttt{c} has
  updated its gradient in this round of FL.
\item \texttt{\bf eval\_server} finally, pulls together the new model parameter 
  by averaging all local gradients. 
  The event \texttt{\bf eval\_server} is only enabled if the set \texttt{ready} contains all clients. After
  the eval step, \texttt{ready} is initialized to the empty set for a new round.
\end{itemize}  
For the inductive definition of the state transition system of \texttt{\ttrel{FL}} that defines the above
informally described semantics of the FL algorithm in IIIf. 
\begin{ttbox}
{\bf{inductive}} state_transition_FL ::
  [infrastructure, infrastructure] \ttfun bool
                              ("_ \ttrel{FL} _")
{\bf{where}} 
{\bf{put\_part}}:
 G = igra I \ttImp P = prot I \ttImp [] \ttin evs P \ttImp
 dataset G = \ttdisjunion{c \ttin clients G} (Part c) \ttImp
 P' = insertp [Put [Part]] (evs P) \ttImp
 G' = Lgraph (gra G)(aloc G)(server G)(clients G)
      (gradient G)(curmodpar) Part (dataset G) \ttImp
 I \ttrel{FL} Infrastructure G' P' (poli I) |
{\bf{get\_grad:}}
 G = igra I \ttImp P = prot I \ttImp l \ttin evs P \ttImp
 c \ttin clients G \ttImp
 Gradc =
 (curmodpar G) - \tteta * \ttavg \ttlos(curmodpar G, partition c)
 P' = insertp (Get c Gradc # l) (evs P) \ttImp
 G' = Lgraph (gra G)(aloc G)(server G)(clients G)
           (insert c (ready G))
           (gradient G(c := Gradc))
           (curmodpar G)(partition G)(dataset G) \ttImp
 I \ttrel{FL} Infrastructure G' P' (poli I) |
{\bf{eval\_server:}}
 G = igra I \ttImp P = prot I \ttImp l \ttin evs G \ttImp
 ready G = clients G \ttImp n = card(dataset G)
 Newmodel =
 \ttsum{c \ttin clients G} (card (partition G c) / n) * gradient G c
 P' = insertp (Eval Newmodel # l) (evs P) \ttImp
 G' = Lgraph (gra G)(aloc G)(server G)(clients G)
           \ttempty
           (gradient G)
            Newmodel
           (partition G)(dataset G) \ttImp
 I \ttrel{FL} Infrastructure G' P' (poli I)                
\end{ttbox}

When formalizing the security properties of FL systems we need to express probabilities of state transition
execution traces. To this end, the IIIf offers a probabilistic path quantifier \cite{kam:19c} that allows to
assign probabilities to paths that end in a specified set of states. Thus, specific properties can be expressed
and quantified by probabilities, for example, that an FL system leads to states with certain model parameters
or in which specific gradients were transmitted.
This path quantifier is relative to a Kripke structure \texttt{M} because ``gradient transmission''
paths must start from initial states and be paths along states within \texttt{M}. The IIIf provides the
following specialized path selector \texttt{M \ttvdash\ttF\;\! s} for Kripke model \texttt{M} and state  \texttt{s}.
The list operator \texttt{hd} yields the head (first element) of a list -- here ensuring the trace \texttt{l} starts
in an initial state -- and the operator \texttt{F} selects a set of finite paths leading into a state \texttt{s}.
\begin{ttbox}
{\bf{definition}} M \ttvdash\ttF s \ttequiv
   \{ l. set l \ttsubseteq states M \ttand hd l \ttin init M \} \ttcap F s
\end{ttbox}
This is now combined with the probability notions provided in the IIIf framework. For example, to introduce the 
probability for a set of finite paths given an assignment \texttt{ops} of probabilities to  paths 
representing outcomes of protocol runs, the operation \texttt{pmap ops} produces a probability distribution
for this protocol. This can then be applied to the previous path selection:
\texttt{pmap ops (M \ttvdash\ttF s)} computes the probability \texttt{P(M \ttvdash\ttF s)} of the event that
paths end in state \texttt{s}.

Given these notions, probabilistic noninterference for FL can be defined based on DP.

\subsection{Definition of NI-FL-DP and Compositionality Theorem}
\label{sec:nifldp}
Noninterference (NI) \cite{gm:82} is a formal definition of secure information flows.
It is an equivalence property over sequences of state transitions (or sequences of events --
depending on the system view).   
Intuitively, it characterizes that from a certain viewpoint various executions of a system do not
reveal any secret information.
For programming languages, noninterference is defined as an indistinguishability relation of
program executions. The indistinguishability is respective to visible system states or visible
input/output variables. 
It thus requires some observer viewpoint level which is commonly added as H(igh) and L(ow) labels that
are variables. It then declares system execution (states) to be equivalent if they only differ
in values of H-variables. Noninterference is often defined as a Low-bisimulation:
in any program run the indistinguishability of pairs of system states is preserved. This implies that
the computation of L-variables is independent of that of H-variables. No unintended ``illicit'' information
flows from H to L can appear in program execution.

The notion of NI of FL by DP (NI-FL-DP) \cite{kpfn:24} is inspired by Volpano and Smith's
probabilistic NI \cite{vs:99}: ``under two memories that may differ
[only] on H-variables, we want to know that the probability that a concurrent program can reach some global
configuration under one of the memories is the same as the probability that it [the concurrent program] reaches
an equivalent configuration under the other''  \cite{vs:99}.
For ``concurrent program'', we read ``distributed infrastructure'', for ``differs on H-variables'' we have
``the datasets differ only by one'' and for ``equivalent configuration'' we have differential privacy, that is,
the probability of reaching equivalent configurations differs only by the DP-factor.
Thus, this notion gives rise to a notion of NI-FL-DP: noninterference of federated learning as differential
privacy,

The definition of NI-FL-DP in IIIf states that for two infrastructure states 
\texttt{\ttsO} and \texttt{\ttsone} that both originate in one initial infrastructure \texttt{I} and whose
data sets differ only by one element (expressed using the \texttt{\ttdisjone} operator), the probabilities of
finite traces leading into such states differ only by the DP-factor \texttt{\tteheps}.
\begin{ttbox}
{\bf{definition}} NI\_FL\_DP I \ttsO \ttsone \tteps \ttequiv 
 I \ttrelstar{FL} \ttsO \ttImp I \ttrelstar{FL} \ttsone \ttImp
 M = \{Kripke \{t . I \ttrelstar{FL} t\}\{I\}\} \ttImp
 \ttGn{0} = igra \ttsO \ttImp \ttGn{1} = igra \ttsone \ttImp
 dataset \ttGn{0} \ttdisjone dataset \ttGn{1} \ttImp 
 P (M \ttvdash\ttF \{s. curmodpar(igra s) = curmodpar \ttGn{0}\}) \ttleq 
 \tteheps * P(M \ttvdash\ttF \{s. curmodpar(igra s) = curmodpar \ttGn{1}\})
\end{ttbox}
Note, that we can safely add $\epsilon \geq 0$ as assumption to NI-FL-DP
because $\ttdisjone$ is symmetric, that is
\begin{ttbox}
  A \ttdisjone B \ttequiv B \ttdisjone A
\end{ttbox}  
If for a given $\epsilon$, we would have $\epsilon < 0$, can just swap the order of the data sets because
\begin{ttbox}
  \ttD{0} \ttleq \tteheps \ttD{1} \ttequiv \ttD{1} \ttleq \ttehneps \ttD{0} 
\end{ttbox}  

Since the notion NI-FL-DP is inspired by probabilistic NI for distributed systems, it is not surprising that the
methodologies and techniques applied in the NI domain can be adapted to provide useful tools for the verification
and analysis also for NI-FL-DP. Many works on NI (and already Goguen and Meseguer in their seminal paper)
introduced so-called ``unwinding relations'' to simplify proving noninterference \cite{gm:84}. Unwinding theorems
allow stepwise (sequential) decomposition of NI. Since the NI-FL-DP property is a security
notion for {\it distributed} systems, an unwinding theorem for this notion consequently needs to decompose
the global property to its (distributed) components. The following compositionality theorem is provided in the
IIIf framework to break down the proof of security of an FL system to that of its client components.
\begin{ttbox}
{\bf{theorem}} NI\_Decomposition:
 I \ttrelstar{FL} \ttsO \ttImp I \ttrelstar{FL} \ttsone \ttImp
 M = \{Kripke \{t . I \ttrelstar{FL} t\}\{I\}\} \ttImp
 \ttGn{0} = igra \ttsO \ttImp \ttGn{1} = igra \ttsone \ttImp
 clients \ttGn{0} = clients \ttGn{1} \ttImp 
 \ttforall c \ttin clients \ttGn{0}. partition \ttGn{0} c \ttdisjone partition \ttGn{1} c
 \ttimp
 P(M \ttvdash\ttF \{s. gradient(igra s) c = gradient \ttGn{0} c\}) \ttleq 
 \tteheps * P(M \ttvdash\ttF \{s. gradient(igra s) c = gradient \ttGn{1} c\})
 \ttImp NI\_FL\_DP I \ttsO \ttsone \tteps
\end{ttbox}  
The precondition for this Decomposition Theorem assumes that
for two state graphs that differ in their local dataset by only one item of the data set,
each client's gradient is differentially private in the local learning of the FL-algorithm.
Given this precondition, the theorem allows to infer that the global algorithm  computing
the average of the local gradients, is also differentially private. That is, the \texttt{NI\_FL\_DP}
property holds.

With the notion of NI-FL-DP, we have a good method for expressing information flows -- and thus leakage --
for FL. In fact, it is very desirable to have quantifiable notions of information flow because in reality
we cannot hope to have 100\% security but at least need a way to quantify it.
The additional Decomposition Theorem is furthermore a very practical tool since it allows to break down the
global NI-FL-DP property to that of the local components of the FL system. Practically, the $\epsilon$ can
be computed as the maximum value of the local components' $\epsilon$.
Ultimately, the $\epsilon$ values are, however, measured values providing quantification of an upper bound
based on empirical data.
To counter this rather empirical way of estimation a security and privacy boundary, it is desirable to have a
more classical metric. This motivates the following extension of relating the NI-FL-DP property to $\text{Adv}(\Att)$
(see Section \ref{sec:AA}).

\subsection{Relationship to Attacker Advantage}

\subsubsection*{Decisional NI-FL-DP problem}
The assumption NI-FL-DP can be expressed as a decision problem following the outline of Definition
\ref{def:ddh}.
\begin{definition}[Decisional NI-FL-DP Problem]\label{def:nifldpprob}
  Given that
  \begin{ttbox}
 I \ttrelstar{FL} \ttsO \ttand I \ttrelstar{FL} \ttsone \ttand
 M = \{Kripke \{t . I \ttrelI^* t\}\{I\} \} \ttand
 \ttGn{0} = igra \ttsO \ttand \ttGn{1} = igra \ttsone \ttand
 dataset \ttGn{0} \ttdisjone dataset \ttGn{1} 
  \end{ttbox}  
  the NI-FL-DP assumption states that the
  distributions $\ttD{0}$ and $\ttD{1}$ are computationally hard to distinguish, where
\begin{ttbox}
  \ttD{0} = M \ttvdash\ttF \{ s. curmodpar(igra s) = curmodpar \ttGn{0} \}
  \ttD{1} = M \ttvdash\ttF \{ s. curmodpar(igra s) = curmodpar \ttGn{1} \}
\end{ttbox}
We denote as Adv$^{\text{dnifldp}}(\Att)$ the advantage of an adversary $\Att$ in distinguishing $\ttD{0}$ and $\ttD{1}$
when seeing (public) current model parameters \texttt{curmodpar \ttGn{$i$}}, $i\in\{0,1\}$.  
\end{definition}
Note, that the above distributions are sets of traces of the state transition relation of the
FL system leading into the state sets designated after the \texttt{\ttvdash\ttF}-operator.
Thus the distributions $\ttD{0}$ and $\ttD{1}$ are effectively the same probability distributions as
used in the NI-FL-DP definition.
Also it is important to note at this point that the NI-FL-DP problem is defined as an assumption.
Clearly, the computational hardness of the indistinguishability depends on the problem of inverting the
prediction function based on the gradients.
However, Definition \ref{def:nifldpprob} 
provides new metrics for security and privacy of FL systems based on the concept of DP. 

The definition of the assumptions can be applied to categorize the security of techniques for defending
FL architectures using the DP techniques. For example, Chang and Zhu \cite{wz:24} report on the concept
of ``Deep Leakage from Gradient (DLG)'' \cite{zhu:19} which refers generally to clients inadvertently revealing
sensitive properties of their local training data. 
In \cite{wz:24}, the authors report on two techniques, {\it gradient sparsification} and {\it pseudo-gradients}, to protect
from DLG. The contribution of our paper is to make the security and privacy of such protection mechanisms quantifiable and
amenable to automated reasoning and verification in IIIf.
To this end the definitions presented in this section can be applied. 

We can use here the notion of $\text{Adv}(\Att)$ (see Section \ref{sec:AA}) to effectively relate the computational
security characterization to DP. 
The following theorem uses the equivalence between $\text{Adv}(\Att)$ and the DP-factor for FL to quantify the
ability of $\Att$ to identify whether an element is in a dataset from the gradinet outputs.
\begin{theorem}[Adv-FL-DP-Eq]
  \label{thm:advfldpeq}
  Let Kripke structure \texttt{M} be defined as
  \texttt{\{Kripke \{t. I \ttrelstar{FL} t\}\{I\}\}}.
  Furthermore, let
  \texttt{I \ttrelstar{FL} \ttsO} and \texttt{I \ttrelstar{FL} \ttsone} where
  \texttt{\ttGn{0} = igra \ttsO} and \texttt{\ttGn{1} = igra \ttsone} such that
  \texttt{dataset \ttGn{0} \ttdisjx{x}\ dataset \ttGn{1}},
  that is, their datasets differ only in $x$.  
  Given two distributions
  \begin{ttbox}
  \ttD{0} = M \ttvdash\ttF \{ s. curmodpar(igra s) = curmodpar \ttGn{0} \}
  \ttD{1} = M \ttvdash\ttF \{ s. curmodpar(igra s) = curmodpar \ttGn{1} \}
  \end{ttbox}
  Then $P_{\substack{x\in D_b\\b\in\{0,1\}}}(\Att(x) = b) = 1 - \frac{1}{2e^{-1}}$  where $e^\epsilon = \frac{P(\ttD{0})}{P(\ttD{1})}$
  and $D_b =$ \texttt{dataset \ttGn{b}}.
\end{theorem}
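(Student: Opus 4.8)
The plan is to reduce the claimed closed form to the algebraic identity of Lemma~\ref{lem:adv} and then to pin down the advantage $\text{Adv}(\Att)$ itself from the two path-probability distributions $\ttD{0}$ and $\ttD{1}$ of Definition~\ref{def:nifldpprob}. First I would observe that the distributions named in the theorem are literally those of the Decisional NI-FL-DP problem, so Lemma~\ref{lem:adv} applies verbatim and gives
\[
P_{\substack{x\in D_b\\b\in\{0,1\}}}(\Att(x) = b) \;=\; \frac{\text{Adv}(\Att)+1}{2}.
\]
Hence the whole theorem collapses to the single equation $\text{Adv}(\Att) = 1 - \ttehneps$, after which the claimed value follows by substitution together with the elementary rearrangement $\tfrac{1}{2}(2-\ttehneps) = 1 - \tfrac{1}{2}\ttehneps = 1 - \frac{1}{2\,\tteheps}$.

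Second, I would unfold $\text{Adv}(\Att) = P_{x\in D_1}(\Att(x)=1) - P_{x\in D_0}(\Att(x)=1)$ and, exactly as in the proof of Lemma~\ref{lem:adv}, apply the global symmetry assumption $P_{x\in D_0}(\Att(x)=0)=P_{x\in D_1}(\Att(x)=1)$ to rewrite the difference in terms of a single conditional-output probability. The purpose of this step is to identify $\text{Adv}(\Att)$ with the statistical distance between $\ttD{0}$ and $\ttD{1}$ that an optimal distinguisher seeing only the public \texttt{curmodpar} can exploit.

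Third --- and this is the crux --- I would relate that distance to the DP ratio $\tteheps = P(\ttD{0})/P(\ttD{1})$. Because $\ttGn{0}$ and $\ttGn{1}$ differ in exactly one data item ($\ttGn{0}\,\ttdisjone\,\ttGn{1}$), the mass that the two path distributions share is the ``indistinguishable'' overlap, on which no attacker beats random guessing, while on the complementary mass the attacker is correct. Reading the NI-FL-DP bound at its defining tight value $\tteheps = P(\ttD{0})/P(\ttD{1})$, the relative overlap is the smaller-to-larger ratio $P(\ttD{1})/P(\ttD{0}) = \ttehneps$, so the distinguishable fraction is $1-\ttehneps$ and therefore $\text{Adv}(\Att) = 1-\ttehneps$. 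Substituting into the first step yields the stated closed form $1 - \frac{1}{2\,\tteheps}$.

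The main obstacle is precisely this third step: converting the one-sided differential-privacy inequality $P(\ttD{0}) \ttleq \tteheps\, P(\ttD{1})$ into the exact overlap $\ttehneps$, and justifying that the optimal distinguisher attains (not merely bounds) the statistical distance. This forces me to fix the tight regime $\tteheps = P(\ttD{0})/P(\ttD{1})$ and to argue that the attacker achieves chance performance on the shared mass and perfect performance off it; the symmetry assumption is exactly what licenses collapsing the two conditional-output probabilities so that this overlap decomposition goes through cleanly.
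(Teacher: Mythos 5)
Your proposal is correct and shares its overall skeleton with the paper's proof: both hinge on Lemma~\ref{lem:adv} to rewrite the target probability as $\frac{\text{Adv}(\Att)+1}{2}$, both reduce the theorem to the single identity $\text{Adv}(\Att) = 1 - \ttehneps$, and both then solve $2P-1 = 1-\ttehneps$ to obtain $P = 1 - \frac{1}{2\tteheps}$ (the displayed value $1-\frac{1}{2e^{-1}}$ in the theorem statement is evidently a typo for this, which you silently and correctly repair). Where you genuinely diverge is in how the crux identity is justified. The paper does not use an overlap argument at all: it checks the two boundary regimes --- $\text{Adv}^{\text{dnifldp}}(\Att)=0$ exactly when $\epsilon=0$, and $\text{Adv}^{\text{dnifldp}}(\Att)$ tending to $1$ as $\epsilon\to\infty$ --- and then simply ``summarizes'' these endpoint behaviours into the formula $\text{Adv}^{\text{dnifldp}}(\Att) = 1-\frac{1}{e^\epsilon}$. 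That is an interpolation, not a derivation: any function vanishing at $\epsilon=0$ and tending to $1$ at infinity (e.g.\ $1-e^{-2\epsilon}$ or $\tanh\epsilon$) meets the same constraints, so the paper asserts the specific functional form rather than proving it. Your third step supplies exactly what the paper omits: the hypothesis-testing/total-variation decomposition in which the overlap between the two distributions is the smaller-to-larger ratio $\ttehneps$ (tight here because the theorem fixes $\tteheps = P(\ttD{0})/P(\ttD{1})$), the distinguisher is at chance on the overlap and correct off it, yielding $\text{Adv}(\Att) = 1-\ttehneps$. This buys a principled derivation of the form $1-\ttehneps$, at the cost of the caveat you yourself flag --- equality rather than an upper bound requires the optimal distinguisher and the saturated DP constraint --- a point the paper's proof glosses over entirely (its definition of $\text{Adv}(\Att)$ is for an arbitrary $\Att$, yet the theorem asserts an exact value). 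In short: same reduction via Lemma~\ref{lem:adv}, but your justification of the key identity is a different, and stronger, argument than the paper's endpoint interpolation.
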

\begin{proof}
Intuitively, the distributions $\ttD{0}$, $\ttD{1}$ are computationally hard to distinguish iff
their NI-FL-DP factor $\epsilon$ is close to 0.
By Definition \ref{def:nifldpprob}, the attacker's advantage and \texttt{NI\_FL\_DP \ttsO\, \ttsone\, \tteps} have
the following equivalences
     \[
   \begin{array}{ccc}
  \text{Adv}^{\text{dnifldp}}(\Att) = 0        & \equiv & \frac{1}{e^\epsilon} = 1,\ \text{if\ } \epsilon = 0\\
  0 < \text{Adv}^{\text{dnifldp}}(\Att) \leq 1 & \equiv & \frac{1}{e^\epsilon} \to 0,\ \text{if}\ \epsilon > 0\\
   \end{array}
   \]
   For the case $\epsilon = 0$, we have perfect indistinguishability, i.e. \texttt{NI\_FL\_DP \ttsO\, \ttsone\, 0},
   and the attacker's advantage is also 0. 
For the non-zero case, the advantage grows towards 1 as $\epsilon$ goes to infinity.
Summarizing, we have  $\text{Adv}^{\text{dnifldp}}(\Att) = 1 - \frac{1}{e^\epsilon}$.
Unfolding the definition of advantage (see Section \ref{sec:AA}) and using Lemma \ref{lem:adv} we get
\[ 2 \times P_{\substack{x\in D_b\\b\in\{0,1\}}}(\Att(x) = b) - 1  = 1 - \frac{1}{e^\epsilon} \]
Transforming the equation, we immediately arrive at
\[ P_{\substack{x\in D_b\\b\in\{0,1\}}}(\Att(x) = b) = 1 - \frac{1}{2e^{-1}}\]
as requested.
\end{proof}
Theorem \ref{thm:advfldpeq} can also be decomposed by applying Theorem \texttt{NI\_Decomposition} to
break down the Advantage to the client components.

\section{Satellite Swarm Application}
Satellite swarms are collaborations of satellites that function as distributed systems in space.
Their spatial distribution allows them to operate from different locations -- often for observational purposes
-- and share tasks to reduce the operational load.
An example is the European Space Agency's (ESA) project Swarm (see Figure \ref{fig:swarm}).
\begin{figure}
\vspace{-.5cm}
  \begin{center}
  \includegraphics[scale=.41]{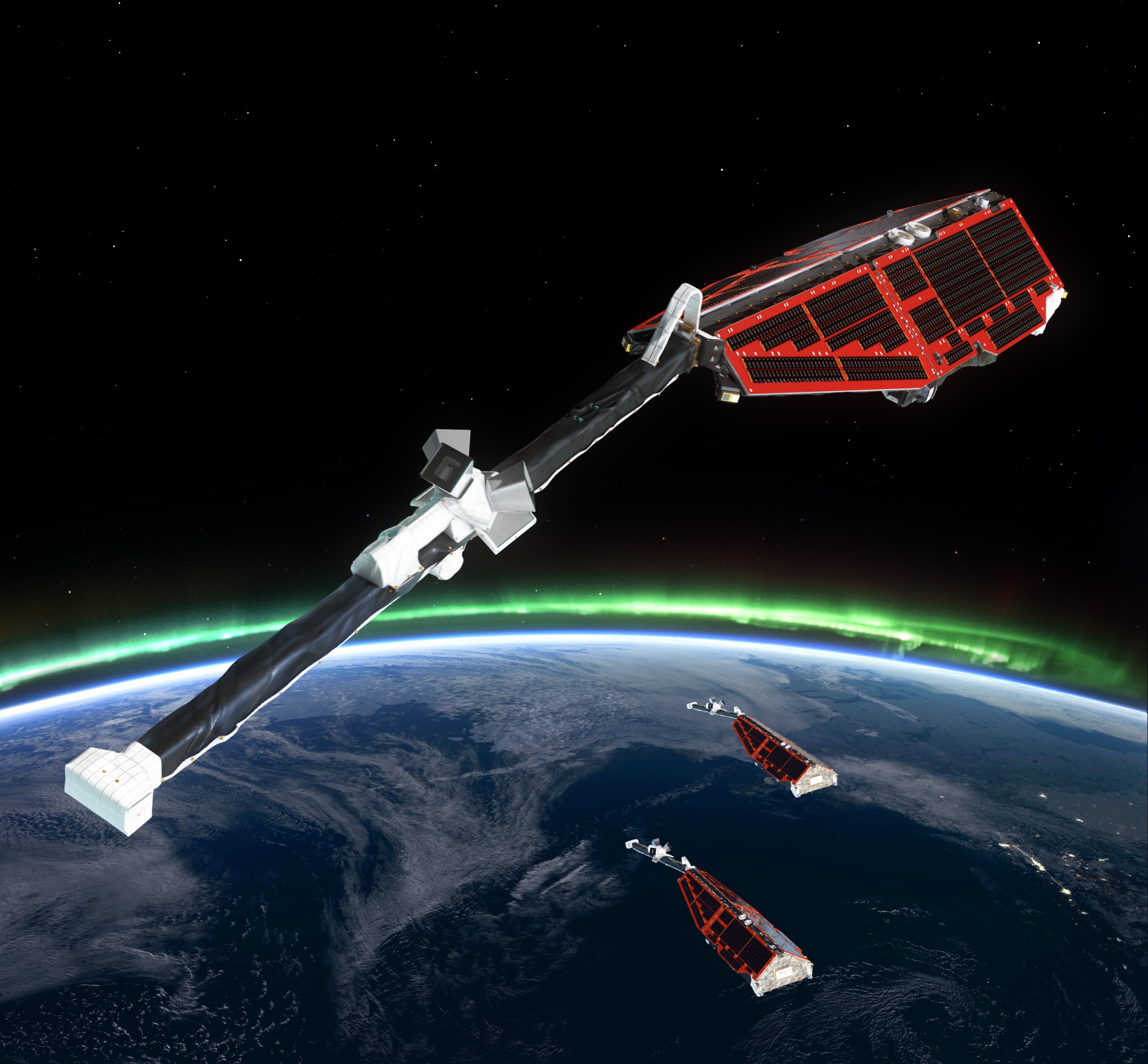}
\end{center}
\vspace{-.5cm}
\caption{Swarm is ESA's first constellation of Earth observation satellites designed to measure the magnetic signals from Earth’s core, mantle, crust, oceans, ionosphere and magnetosphere, providing data that will allow scientists to study the complexities of our protective magnetic field, \copyright{ESA}.}
\label{fig:swarm}
\vspace{-.5cm}
\end{figure}
Scientific projects like Swarm usually have a humanitarian and ethical motivation and do not
seek to hide their data. However, it can be easily imagined that security and privacy can become an issue,
for example, to protect private data of individuals or security criticality, like locations of secret military
assets.
As an illustrative case study, let {\it Moniteo} be a Swarm-like application to monitor temperature data on the Earth's
surface using FL. That is, each Moniteo's agent satellites $\sigma$ is geo-stationary and equipped with an infrared camera.
The mission is to find a hypothesis (model) for the temperature in given locations using FL.
Initially, each $\sigma$ is equipped with a dataset $\text{Part}_\sigma$ of locations and their temperatures.
Certain locations must be omitted in the data set because they convey private information. 
That is, for geographic data sets $D_0, D_1$ we have that $D_0 \ttdisjone D_1$ for the partition of data sets.
Even if $\sigma$ learns no data, it will only serve to improve the local hypothesis: the new data does not need to
be transported to the server.
Now, Moniteo's security depends on low (or no) information flows from the local data to the SGDs transmitted to the
server. Here, Theorem \ref{thm:advfldpeq} and its corollary for the decomposition of the Advantage yield that local
deviations of data, e.g., by severe temperature increase in a new data point, cannot be distinguished by an attacker.
That is, if a client satellite $\sigma$ would have two data partitions that differ in one data point because of a
different measurement of that one data point, then the difference in the gradients would result in a difference
below $e^\epsilon$. Consequently, if the satellite $\sigma$'s infrared cameras measure a significant difference
of temperature in a location that is marked as secret, it may safely omit the data point's value from the
gradient computation without risking significant information flows above the threshold of computational security.

\section{Conclusions and Related Work}
\label{sec:concl}
In this paper, we have revisited the notion of Information Flow Security (NI) for Federated Learning (FL) by
Differential Privacy (DP) (NI-FL-DP) and related it to Computational Security given by the cryptographic notion
of Attacker Agvantage Adv($\Att$). We illustrated its importance by a satellite swarm use case.

We review here some relevant related works.
\subsection{Related Work on the Isabelle Insider and Infrastructure framework}
\label{sec:relisa}
A whole range of publications have documented the development of the Isabelle Insider framework.
The publications \cite{kp:13,kp:14,kp:16} were first to apply formal verification to insider threats
expressing policies, attacks and insider behaviour in Modelcheckers and Isabelle.
As a generic validation it has been shown that the insider threat patterns defined in the CERT guide
on insider threats \cite{cmt:12} could be handled. 
After this initiation of an  Isabelle Insider framework, it has been applied to auction protocols
\cite{kkp:16,kkp:16a}. This showed that the Insider framework can also encompass other methodologies, here,
the inductive approach to protocol verification \cite{pau:98}. 
Unintentional insiders in social networking are a further application \cite{kam:22b} as is the use of the
framework for the use of black box AI \cite{kam:22d}.
Airplane case studies on insider threats \cite{kk:16,kk:21} exhibited further requirements for the framework:
dynamic state verification was added to the core formalisation thus formalizing a mutable state. At the same
time, Kripke structures and the temporal logic CTL have been formalized supporting Modelchecking. Also a
semantics for attack trees has been formalized on this basis and proved to be correct and complete
\cite{kam:17a,kam:17b,kam:17c,kam:18b,kam:19a}.
Attack trees enabled integrating Isabelle formal reasoning for IoT systems in the CHIST-ERA project SUCCESS \cite{suc:16}.
There, attack trees combined with the Behaviour Interaction Priority (BIP) component architecture model enabled
developing security and privacy enhanced IoT solutions.
Most of the applications of the framework required the formalisation of technical and physical aspects of sysmte
and less so psychological features of the actors.
Thus, the development of the Isabelle {\it Insider} framework was rather an Isabelle {\it Infrastructure} framework.
The name of the framework has then be established as the all encompassing Isabelle Infarstructure and Insider framework
(IIIf)
The strong expressiveness of Isabelle allows to formalize the IoT scenarios as well as actors and policies.
the IIIf has also been applied to evaluate IoT scenarios with respect to policies like the European data privacy
regulation GDPR \cite{kam:18a}.
The security protocol application domain, initiated by auction protocols \cite{kkp:16,kkp:16a} has led to explore
the analysis of Quantum Cryptography which in turn necessitated the extension by probabilities \cite{kam:19b,kam:19c,kam:19d}.
Recently the IIIf has been used to show how refinement can develop practical implementations of the
Quantum Key Distribution (QKD) protocol. These implementations aim at 
overcoming Phonton-Number-Splitting attacks by the extension of QKD to Decoy QKD. The IIIf refinement
could be successfully applied to prove absence of these attacks \cite{knpw:25}.

Requirements raised by these various security and privacy case studies have shown the need for a
cyclic engineering process for developing specifications and refining them towards implementations.
Traditionally, formal specifications can be derived from semi-formal (UML) system specifications in
a rigorous way \cite{bk:03} leading to good starting points for this enginnering process.
A first case study takes the IoT healthcare application and exemplifies a step-by-step
refinement interspersed with attack analysis using attack trees to increase privacy by ultimately
introducing a blockchain for access control \cite{kam:19a}.
Support for a dedicated security refinement process (Refinement Risk cycle, RR cycle) are proposed
and experimented with in the paper \cite{kam:20a}. The publication \cite{kl:20} illustrated security
refinement on the development of a decentralized privacy preserving architecture for the Corona Warning
App (CWA) developing and applying a formalized notion of a refinement process, This is then also validated
in a reconsideration of a more systematic introduction of a blockchain into the IoT healthcare study \cite{kam:23b}.

The process of security engineering is additionally extended to information flow security in \cite{kam:24a} by
exploring the relationship of NI and refinement via the notion of a shadow. Another experimentation with the
relationship of distributed GIS systems and machine learning lead to extending IIIf by a notion of Federated
Learning for differentially privacy \cite{kam:24b}. To this end, a distributed version of NI based on probabilities
is formalized vaguely inspired by Volpano and Smith's \cite{vs:99} intuition of how to express probabilistic NI for
distributed systems.
This work has now been extended to relate to the classical notion of attacker advantage.

\subsection{Related Work on DP, FL, and NI}
Information theoretic aspects of DP are studied intensively at the moment as is
documented in the survey paper \cite{am:23}. While the foundations appear to become clarified, the relationship
to practical implementations of DP, that is, inference or verification systems and thus notions like
noninterference are still fairly open. In particular, works that support the formalization and automated
reasoning are very scarce with a few notable exceptions. To our knowledge, amongst this there are none in
relation to federated learning and modeling distribution of data, communication and processing as we do using the
infrastructures, actors, and actions of the IIIf.
Tschantz et al \cite{tkd:11} already observed the similarity between noninterference and differential privacy
with respect to verification. More recently, Liu et al \cite{lwfz:23} took these concepts further and presented a
dedicated model checking technique.
However, both works do not consider the conceptual relation of FL and DP to fundamental notions of Cryptography
like computational hardness and $\text{Adv}(\Att)$.

\subsection{Conclusion}
The notion of DP naturally corresponds to the NI property which gives rise to a natural way of defining
a quantified notion of NI-FL-DP (see Section \ref{sec:nifldp}). The quantification is useful to relate
practically motivated attack techniques, like the DLG, and their defense techniques, that is, gradient
sparsification and pseudo-gradients or fundamental notions of cryptography and to rigorous formalisation
and automated reasoning and verification in IIIf.

\bibliographystyle{abbrv}
\bibliography{biblio}

\end{document}